%% ****** Start of file aiptemplate.tex ****** %
%%
%%   This file is part of the files in the distribution of AIP substyles for REVTeX4.
%%   Version 4.1 of 9 October 2009.
%%
%
% This is a template for producing documents for use with 
% the REVTEX 4.1 document class and the AIP substyles.
% 
% Copy this file to another name and then work on that file.
% That way, you always have this original template file to use.

\documentclass[aip, jmp, author-year]{revtex4-1}
%\documentclass[aip,reprint]{revtex4-1}

%\documentclass[12pt]{article}
%\usepackage{epsfig}
%\textheight 20truecm
%\textwidth 15truecm
%\hoffset -1.5truecm
\usepackage{amsmath}
\usepackage{amsthm}

%Qiang's stuff
%\usepackage[psamsfonts]{amsfonts}
\usepackage{appendix}
\usepackage{amsmath, amsthm, amssymb}

\usepackage{enumerate}
\usepackage{setspace}
\usepackage{natbib}

\newcommand{\derv}[1]{\frac{\partial}{\partial #1}}
\newcommand{\deriv}[2]{\frac{\partial #1}{\partial #2}}

\newcommand{\beqn}{\begin{equation}}
\newcommand{\eeqn}{\end{equation}}
\newcommand{\beqnar}{\begin{eqnarray}}
\newcommand{\eeqnar}{\end{eqnarray}}

\newtheorem{theorem}{Theorem}[section]

\newtheorem{proposition}[theorem]{Proposition}

\newenvironment{example}[1][Example]{\begin{trivlist}
\item[\hskip \labelsep {\bfseries #1}]}{\end{trivlist}}

\renewcommand{\theequation}{\arabic{section}.\arabic{equation}}
\newcommand{\contr}{\,\lrcorner\,}

%\documentclass[aip,graphicx]{revtex4-1}
%\documentclass[aip,reprint]{revtex4-1}

%\draft % marks overfull lines with a black rule on the right

\begin{document}

% Use the \preprint command to place your local institutional report number 
% on the title page in preprint mode.
% Multiple \preprint commands are allowed.
\preprint{Submitted to Journal of Mathematical Physics}

\title{Multi-Symplectic,  Lagrangian, One-Dimensional Gas Dynamics} %Title of paper

% repeat the \author .. \affiliation  etc. as needed
% \email, \thanks, \homepage, \altaffiliation all apply to the current author.
% Explanatory text should go in the []'s, 
% actual e-mail address or url should go in the {}'s for \email and \homepage.
% Please use the appropriate macro for the type of information

% \affiliation command applies to all authors since the last \affiliation command. 
% The \affiliation command should follow the other information.

\author{G. M. Webb}
\email[]{gmw0002@uah.edu}
%\homepage[]{Your web page}
%\thanks{}
%\altaffiliation{}
\affiliation{$^1$CSPAR, The University of Alabama in Huntsville,\\ 
Huntsville AL 35805, USA}

% Collaboration name, if desired (requires use of superscriptaddress option in \documentclass). 
% \noaffiliation is required (may also be used with the \author command).
%\collaboration{}
%\noaffiliation

\date{\today}

\begin{abstract}
The equations of Lagrangian, ideal, one-dimensional (1D), compressible 
gas dynamics are written in a multi-symplectic form using the  
Lagrangian mass coordinate 
 $m$ and time $t$ as independent variables, and in which the Eulerian 
position of the fluid element $x=x(m,t)$ is one of the dependent variables.
This approach differs from the 
Eulerian, multi-symplectic approach using Clebsch variables.  
Lagrangian constraints are used to specify equations for $x_m$, $x_t$ and $S_t$ consistent with the 
Lagrangian map, where $S$ is the entropy of the gas. 
 We require $S_t=0$ corresponding to advection 
of the entropy $S$ with the flow.  We show that the Lagrangian Hamiltonian equations are related 
to the de Donder-Weyl multi-momentum  formulation. The pullback conservation laws and the symplecticity
conservation laws are discussed. The pullback conservation laws correspond to invariance of the 
action with respect to translations in time (energy conservation) and translations in $m$ in 
Noether's theorem. The conservation law due to $m$-translation invariance 
gives rise to a novel nonlocal conservation law involving the Clebsch variable $r$ 
used to impose $\partial S(m,t)/\partial t=0$.  
 Translation invariance with respect to $x$ 
in Noether's theorem is associated with momentum conservation. We obtain the Cartan-Poincar\'e 
form for the system, and use it to obtain a closed ideal of two-forms representing 
the equation system. 
\end{abstract}

%\begin{abstract}
% insert abstract here
%\end{abstract}

\pacs{47.10.Df, 47.10.A, 47.10.ab, 45.20Jj, 0.230.Jr}% insert suggested PACS numbers in braces on next line

\maketitle %\maketitle must follow title, authors, abstract and \pacs
\section{Introduction}
\setcounter{equation}{0}
In a recent paper, \cite{Webb14c}  obtained a multi-symplectic 
formulation of magnetohydrodynamics (MHD) by using Clebsch variables in an Eulerian variational principle, 
in which the Lagrangian is modified 
by constraints using Lagrange multipliers, that ensure mass, entropy, magnetic flux and the 
Lin constraint are conserved following the flow. The work of \cite{Webb14c} used similar methods 
to \cite{Cotter07} who derived multi-symplectic equations for fluid dynamic type systems.
The work of \cite{Cotter07} uses the Euler-Poincar\'e approach to Hamiltonian fluid type systems 
developed by \cite{Holm98}. \cite{WebbMace15} derive a generalized potential vorticity type 
conservation law in MHD by using Noether's second theorem, in conjunction with a non-field aligned 
fluid relabelling symmetry.  

Work on multi-momentum Hamiltonian systems has a long history, going back to the work of \cite{deDonder35} 
and \cite{Weyl35} where multi-momentum maps and generalized Legendre transformations were introduced 
to generalize Hamiltonian mechanics to a more covariant formulation, where time in a fixed reference frame is 
not the only evolution variable in the equations. The de Donder-Weyl Hamiltonian equations apply to 
action principles in which the Lagrangian $L=L({\bf x},\varphi^i,\partial \varphi^i/\partial x^\mu)$ 
in the independent variables 
${\bf x}$ and the dependent field variables $\varphi^i$ ($1\leq i\leq m$ say),
includes at least two independent derivatives $\partial\varphi^k/\partial x^1$ and $\partial\varphi^k/\partial x^2$
say,  for $\varphi^k$. 
 It turns out that the equations of ideal, 1D Lagrangian gas dynamics can be cast in the 
de Donder-Weyl Hamiltonian form, since the Lagrangian $L$ in the formulation depends on both $x_m$ and 
$x_t$ where $x=x(m,t)$ defines the Lagrangian map, in which $x$ 
is a function of the Lagrangian mass coordinate $m$ and time $t$.  The de Donder-Weyl 
Hamiltonian then depends on the multi-momenta $\pi^t_x=\partial L/\partial x_t$, 
 $\pi^m_x=\partial L/\partial x_m$, $\pi^t_S=\partial L/\partial S_t$ (see e.g. \cite{Kanatchikov97, Kanatchikov98}, 
\cite{Forger03}, \cite{Forger05}). 
 There is an extensive literature on multi-symplectic and multi-momentum Hamiltonian systems
(see e.g. \cite{Gotay91a, Gotay91b}, \cite{Gotay04a, Gotay04b}, \cite{Roman-Roy09}, \cite{Marsden99}, 
\cite{Carinena91}, 
\cite{Bridges05, Bridges10}, \cite{Kanatchikov93, Kanatchikov97, Kanatchikov98}, \cite{Cantrijn99}). 

\cite{Bridges10} relate multi-symplectic structures and generalized Hamiltonian differential 
equation systems  
to the variational bicomplex. The system of equations is written in the generalized 
Hamiltonian form ${\bf X}\contr\omega=-dH$ where ${\bf X}$ is a generalized Hamiltonian vector field, 
$\omega$ is the symplectic form, and $H$ is the generalized Hamiltonian. In this development $H$ can be 
a differential form (i.e. $H$ need not be a scalar) and ${\bf X}$ can be a multi-vector. This approach 
circumvents the need to use a Legendre transformation relating the Lagrangian to the Hamiltonian of the 
system, which is not always obvious for the case of singular Lagrangians. In the variational bicomplex, 
the exterior derivative $d$ is split up into horizontal ($d_h$) and vertical ($d_v$) components (i.e. 
$d=d_h+d_v$). The horizontal 
component of $d_h f$ for a function $f$ involving ${\bf u}$ and their derivatives 
and associated contact one-forms and tensors 
(i.e. the jet space), corresponds to changes on the solution manifold where the ${\bf u}$ explicitly 
depend on the independent variables ${\bf x}$. The vertical component $d_v f$ corresponds to changes in the 
contact one-forms in the jet space which are non-zero off the solution manifold (but are zero on the 
solution manifold). Both $d_h$ and $d_v$ are needed to compute higher order derivatives and  differential 
forms and multi-vectors.  

For the case of singular Lagrangians, Dirac's theory of constraints is useful in determining the Hamiltonian 
and a new Poisson bracket (the Dirac bracket) 
that satisfies the Jacobi identity (e.g. \cite{Chandre13a}, \cite{Chandre13b}).

{\bf One of the benifits of the multi-symplectic formulation of the fluid equations is that 
both space and time can be thought of as evolution variables (e.g. \cite{Bridges05})). 
This is useful in the formulation of travelling wave problems, where two distinct Hamiltonian 
formulations of the equations are possible with, distinct Hamiltonians (e.g. \cite{Bridges92}, 
\cite{Webb05b,Webb07,Webb08,Webb14d}). For systems in one Cartesian space variable $x$
and one time variable $t$, both $x$ and $t$ can be regarded as the evolution variable. 
In the case that time is regarded as the 
evolution variable, the Hamiltonian corresponds to the conserved energy flux, 
in which the conserved $x$-momentum flux acts as a constraint. However, in the case that space 
variable $x$ is the evolution variable, the conserved momentum flux is the Hamiltonian
and the conserved energy flux acts as a constraint. One of the original motivations for the 
multi-symplectic approach to field theory, was the desire for a covariant Hamiltonian formulation
that is frame independent (e.g. \cite{Gotay04a,Gotay04b}). In the usual Hamiltonian approach, 
the space and time variables are first chosen, and the equations are written in Hamiltonian form,
in which time is the evolution variable. This leads to non-covariant evolution equations. 
Multi-symplectic Poisson bracket formulations of multi-symplectic systems have been developed 
for example, by 
(\cite{Kanatchikov93,Kanatchikov97},\cite{Forger03,Forger05}).
} 
 
Our main aim is to obtain multi-symplectic and multi-momentum equations in fluid dynamics 
by working directly with the Lagrangian fluid dynamics  equations 
(see e.g. \cite{Courant76}, \cite{Sjoberg04} for 1D gas dynamics and \cite{Newcomb62} 
for the case of magnetohydrodynamics). \cite{Akhatov91} and \cite{Bluman10} give  
comprehensive accounts of planar gas dynamics, and the 
associated conservation laws of the equations using potential symmetries of the equations and 
related equations (the so-called tree of equations related to the gas dynamic equations) using both the Eulerian 
and Lagrangian form of the equations. \cite{Verosky84} studied higher order symmetries and recursion operators 
for symmetries of the 1D, isentropic gas dynamic equations. \cite{Nutku87} and 
\cite{Olver88} have studied the bi-Hamiltonian and tri-Hamiltonian structure 
of the 1D gas dynamic equations for the case of an isentropic gas ($S=const.$).  

The outline of the paper is as follows. In section 2, we present the equations of 1D Eulerian gas dynamics, including 
a discussion of the first law of thermodynamics. In Section 3, we introduce the Lagrangian map for 1D gas 
dynamics and write the equations in terms of the Lagrangian mass coordinate $m$  and time $t$ as independent variables
where $x=x(m,t)$ gives the Eulerian position of the fluid element. A Lagrangian variational principle  
(e.g \cite{Broer74}, \cite{Webb09}) without constraints is used to describe the system. The 
Euler Lagrange equations for the system give a nonlinear wave equation for $x=x(m,t)$ in which the entropy 
$S=S(m)$ is a pre-specified function of $m$. In Section 4, we develop a constrained variational principle 
using $m$ and $t$ as independent variables, in which the Lagrangian fluid dynamic equations $x_m=\tau$ 
($\tau=1/\rho$), $x_t=u$,  and $S_t=0$ are treated as constraints. To 
ensure that $\partial S(m,t)/\partial t=0$ we add a constraint term $r\partial S/\partial t$ to the 
Lagrangian. Variations of the action with respect to $x$ gives the Lagrangian momentum equation. 
The variational principle requires that $\partial r(m,t)/\partial t=-T$ where $T$ is the 
temperature of the gas. This formulation is used to derive a de Donder-Weyl, multi-momentum Hamiltonian 
description of the system (Section 4.1). In Section 4.2 we derive the multi-symplectic form of the 
equations, which uses the same dependent variables as the de Donder-Weyl formulation. Section 4.3 derives the pullback
and symplecticity conservation laws (e.g. \cite{Hydon05}) for the system. It turns out that one of the pullback 
conservation  laws (which corresponds to translation invariance of the action with respect 
to the Lagrange label $m$) yields a nonlocal conservation law involving the Clebsch variable $r$. 
In Section 4.4 we show how the nonlocal pullback conservation law, and the momentum 
and energy conservation laws arise from Noether's theorem. 
Section 4.5 presents the Cartan-Poincar\'e form for the system. The Cartan-Poincar\'e form can be used 
to obtain a closed ideal of forms representing the partial differential equation system. 
This set of forms can be used to obtain the Lie symmetries and conservation laws of the system using 
Cartan's geometric theory of partial differential equations (e.g. \cite{Harrison71}; 
 \cite{Wahlquist75}). {\bf It remains an open problem, to investigate in detail 
the closed ideal of forms obtained from the Cartan Poincar\'e form. Such 
an investigation will depend on the detailed form of the equation of state for the system 
(i.e. the dependence of the pressure $p$ and energy density $\varepsilon$ on the density $\rho$ and 
entropy $S$ of the system). This will be useful in extending the many investigations 
of 1D gas dynamics for the barotropic gas case, where $p=p(\rho)$, and $\varepsilon=\varepsilon (\rho)$,
to the more general, non-barotopic case.}     Section 5 concludes with a summary and discussion.  
\section{One dimensional gas dynamics}
\setcounter{equation}{0}
The time dependent, ideal, inviscid equations of  Eulerian gas dynamics
in one Cartesian space dimension can be written in the form (\cite{Courant76}, \cite{Sjoberg04}):
\begin{align}
\deriv{\rho}{t}+u\deriv{\rho}{x}+\rho\deriv{u}{x}=&0, \label{eq:2.1}\\
\rho\left(\deriv{u}{t}+u\deriv{u}{x}\right)+\deriv{p}{x}=&0, \label{eq:2.2}\\
\deriv{S}{t}+u\deriv{S}{x}=&0, \label{eq:2.3}
\end{align}
where $u$ is the fluid velocity (assumed to be directed along the $x$-axis), $\rho$ is
the gas density and $p$ is the gas pressure, and $S$ is the gas entropy. 
Equations (\ref{eq:2.1})-(\ref{eq:2.3}) need to be supplemented by an equation of 
state for the gas, (e.g. $p=p(\rho,S)$). The equation of state of the gas is 
related to the first law of thermodynamics:
\begin{equation}
TdS=dQ=dU+pd\tau\quad\hbox{where}\quad \tau=\frac{1}{\rho}. \label{eq:2.4}
\end{equation}
For ideal gas dynamics, $dQ/dt=0$ and $dS/dt=0$ where $d/dt=\partial/\partial t+u\partial/\partial x$ 
is the Lagrangian time derivative moving with the flow. In (\ref{eq:2.4}) $U$ is the internal 
energy of the gas per unit mass, $\tau=1/\rho$ is the specific volume, and $T$ is the 
temperature of the gas.  Using the internal energy per unit volume, $\varepsilon=\rho U$ instead of $U$, the first 
law of thermodynamics reduces to:
\begin{equation}
TdS=\frac{1}{\rho}\left(d\varepsilon-wd\rho\right)\quad\hbox{where}\quad w=\frac{\varepsilon+p}{\rho}, 
\label{eq:2.5}
\end{equation}
is the enthalpy of the gas. Assuming $\varepsilon=\varepsilon(\rho,S)$,  
(\ref{eq:2.5}) gives:
\begin{equation}
\rho T=\varepsilon_S,\quad w=\varepsilon_\rho,\quad p=\rho\varepsilon_\rho-\varepsilon. \label{eq:2.6}
\end{equation}
From (\ref{eq:2.5}): 
\begin{equation}
TdS=dw-\frac{1}{\rho}dp\quad\hbox{and}\quad -\frac{1}{\rho}\nabla p=T\nabla S-\nabla w, \label{eq:2.7}
\end{equation}
which is useful in obtaining the Eulerian energy conservation equation for the system. 
\section{The Lagrangian map}
\setcounter{equation}{0}
In fluid dynamics, the Lagrangian map is defined as the solution of the differential
equation system (e.g \cite{Courant76}, \cite{Broer74}):
\begin{equation}
\frac{d{\bf x}}{dt}={\bf u}({\bf x},t), \label{eq:3.1}
\end{equation}
of the form ${\bf x}={\bf X}({\bf x}_0,t)$ where ${\bf x}={\bf x}_0$ at time $t=0$, in which
the fluid velocity ${\bf u}({\bf x},t)$ is assumed to be a known function of ${\bf x}$ and $t$. 
If ${\bf x}={\bf X}({\bf x}_0,t)$ is a 1-1 invertible map, then the inverse map 
gives ${\bf x}_0={\bf f}({\bf x},t)$ and the fluid velocity ${\bf u}$ 
is given by ${\bf u}=\partial {\bf X}({\bf x}_0,t)/\partial t$, where the time derivative 
is taken with the Lagrange label ${\bf x}_0$ held constant. 

For the case of 1D gas dynamics in one Cartesian space coordinate $x$, the Lagrangian map implies:
\begin{equation}
dx=\deriv{x}{x_0}dx_0+\deriv{x}{t}dt=\deriv{x}{x_0}\left(\deriv{x_0}{x} dx+\deriv{x_0}{t}dt\right)+
\deriv{x}{t}dt. \label{eq:3.2}
\end{equation}
which implies:
\begin{equation}
\deriv{x}{x_0}\deriv{x_0}{x}=1,\quad \deriv{x}{t}+\deriv{x}{x_0}\deriv{x_0}{t}=0. \label{eq:3.3}
\end{equation}
From (\ref{eq:3.3}) we obtain:
\begin{equation}
\deriv{x_0}{t}+u\deriv{x_0}{x}=0, \label{eq:3.4}
\end{equation}
and hence the Lagrange label $x_0$ is advected with the flow. 

The mass continuity equation (\ref{eq:2.1}) may be written in the form:
\begin{equation}
\rho(x,t) dx=\rho_0 dx_0\quad \hbox{or}\quad \rho J=\rho_0, \label{eq:3.5}
\end{equation}
where $\rho_0=\rho(x_0,0)$ and $J=\partial x/\partial x_0$ is the Jacobian of the Lagrangian 
map. The Lagrangian mass coordinate:
\begin{equation}
m=\int_{-\infty}^x \rho(x',t) dx'=\int_{-\infty}^{x_0}\rho(x_0',0)dx'_0, \label{eq:3.6}
\end{equation}
can be used instead of $x_0$ as a Lagrangian label. Note that $m=m(x_0)$ and the entropy $S=S(x_0)$ 
are both advected with the flow. Using the definition of the Lagrangian map and using (\ref{eq:3.6}) 
we obtain the auxiliary relations:
\begin{equation}
x_m=\tau=\frac{1}{\rho}\quad \hbox{and}\quad x_t=u, \label{eq:3.7}
\end{equation}
where we regard $x=x(m,t)$ to be a function of $m$ and $t$.  

A variational principle for 1D gas dynamics is well known 
(e.g. \cite{Courant76}, \cite{Broer74}, \cite{Webb09}). The action 
is defined as:
\begin{equation}
A=\int\int{\cal L}dx\ dt=\int\int{\cal L}_0 dm\ dt, \label{eq:3.8}
\end{equation}
where ${\cal L}$ and ${\cal L}_0$ are defined by:
\begin{equation}
{\cal L}=\frac{1}{2}\rho x_t^2-\varepsilon(\rho,S),\quad {\cal L}_0{\bf (x,m,t)}=x_m {\cal L}=\frac{1}{2} x_t^2-F(x_m,m),\quad 
F(x_m,m)=\frac{\varepsilon(\rho,S)}{\rho}. \label{eq:3.9}
\end{equation}
The Lagrangian $x$-momentum equation for the system is given by the variational equation $\delta A/\delta x=0$, i.e.
\begin{equation}
\frac{\delta A}{\delta x}=-\left(x_{tt}+p_m\right)=0\quad\hbox{where}\quad p=\rho\varepsilon_\rho-\varepsilon=-
\deriv{F(x_m,m)}{x_m}, \label{eq:3.10}
\end{equation}
is the gas pressure. The Eulerian momentum equation $u_t+u u_x=-p_x/\rho$ 
is obtained by multiplying the Lagrangian momentum equation 
(\ref{eq:3.10}) by $\rho=1/x_m$  (note $x_{tt}=u_t+uu_x$).

The Euler Lagrange equation (\ref{eq:3.10}) is in fact a nonlinear wave equation for $x(m,t)$ 
of the form:
\begin{equation}
x_{tt}-a^2\frac{x_{mm}}{x_m^2}+p_S S_m=0, \label{eq:3.11}
\end{equation}
where $a^2=(\partial p/\partial\rho)_S$ is the square of the adiabatic sound speed of the gas. 
For the special case of a gas with a constant adiabatic index $\gamma$, the gas pressure has the form:
\begin{equation}
p=p_1\left(\frac{\rho}{\rho_1}\right)^\gamma \exp\left(\frac{S}{C_v}\right)\quad \hbox{and}\quad \rho=\frac{1}{x_m}. 
\label{eq:3.12}
\end{equation}
In this latter case (\ref{eq:3.11}) reduces to the equation:
\begin{equation}
x_{tt}=N_1x_m^{-\gamma-1}\exp[{\bar S}(m)]\left(\gamma x_{mm}-x_m \bar{S}_m\right). \label{eq:3.13}
\end{equation}
where $N_1=p_1\rho_1^{-\gamma}$ and ${\bar S}=S/C_v$. This is a complicated, nonlinear wave equation for 
$x(m,t)$. However, it reduces to a linear elliptic equation for the case  of the Chaplygin gas 
with $\gamma=-1$ (e.g. \cite{Akhatov91}, \cite{Webb09}). 

Using the first law of thermodynamics (\ref{eq:2.5})-(\ref{eq:2.7}) we obtain the Lagrangian evolution equation 
for the gas pressure as:
\begin{equation}
p_t+\frac{a^2}{\tau^2} u_m=0, \label{eq:3.14}
\end{equation}
where $p=p(m,t)$ is the Lagrangian form for $p$. In the Lagrangian fluid description, the mass continuity 
equation takes the form:
\begin{equation}
u_m=\tau_t. \label{eq:3.15}
\end{equation}
The mass continuity equation (\ref{eq:3.15}) is a consequence of the integrability condition
$x_{mt}=x_{tm}$ in the Lagrangian approach. 

\section{de Donder-Weyl and multi-symplectic approach}
\setcounter{equation}{0}
Consider the action:
\begin{equation}
\hat{A}=\int \hat{L}\ dtdm, \label{eq:dw1}
\end{equation}
where the Lagrangian density ${\bf \hat{L}(x,t,m,u,\tau,S,r)}$ is given by:
\begin{equation}
\hat{L}=L+\lambda(x_m-\tau)+\nu(x_t-u)+rS_t,\quad L=\frac{1}{2}u^2-\varepsilon\tau. \label{eq:dw2}
\end{equation}
The constrained variational principle (\ref{eq:dw1})-(\ref{eq:dw2}) with Lagrange multipliers $\lambda$, $\nu$ and $r$ 
ensures that the basic 1D gas dynamic equations:
\begin{equation}
x_m-\tau=0,\quad x_t-u=0 \quad\hbox{and}\quad S_t=0, \label{eq:dw3}
\end{equation}
are satisfied. Here, $L$ is the unconstrained 
Lagrangian density for 1D gas dynamics, i.e. $L={\cal L}_0$ where ${\cal L}_0$ is given by (\ref{eq:3.9}). 

For the variational principle (\ref{eq:dw1})-(\ref{eq:dw2}) the stationary point conditions for the action 
give the equations:
\begin{align}
\frac{\delta \hat{A}}{\delta\tau}=&\rho\varepsilon_\rho-\varepsilon-\lambda\equiv (p-\lambda)=0, 
\quad \frac{\delta \hat{A}}{\delta\lambda}=(x_m-\tau)=0, \label{eq:dw4}\\
\frac{\delta\hat{A}}{\delta u}=&(u-\nu)=0,\quad \frac{\delta\hat{A}}{\delta\nu}=(x_t-u)=0, \label{eq:dw5}\\
\frac{\delta\hat{A}}{\delta S}=&-\left(r_t+T\right)=0,\quad \frac{\delta\hat{A}}{\delta r}=S_t=0. \label{eq:dw6}
\end{align}
The variational equation:
\begin{equation}
\frac{\delta\hat{A}}{\delta x}=-\left(\nu_t+\lambda_m\right)=-\left(u_t+p_m\right)=0, \label{eq:dw7}
\end{equation}
gives the Lagrangian momentum equation for the system.

Equations (\ref{eq:dw4}) and (\ref{eq:dw5}) give 
\begin{equation}
\lambda=p\quad\hbox{and}\quad \nu=u, \label{eq:dw8}
\end{equation}
for the Lagrange multipliers in (\ref{eq:dw2}). Thus, we can replace $\hat{L}$ in (\ref{eq:dw1}) by:
\begin{align}
\tilde{L}(x,t,m,u,p,r,S)=&\frac{1}{2}u^2-\frac{\varepsilon}{\rho}+p(x_m-\tau)+u(x_t-u)+rS_t\nonumber\\
=&ux_t+p x_m+rS_t-\left(\frac{1}{2}u^2+w\right),
 \label{eq:dw9}
\end{align}
where $w=(\varepsilon+p)/\rho$ is the enthalpy of the gas 
and $\tau=1/\rho$. In (\ref{eq:dw9})  the Lagrange multiplier $r$ ensures that 
the entropy is advected with the flow, i.e. $S_t=0$ where $S=S(m,t)$. 
In our analysis we use the state vector:
\begin{equation}
{\bf z}=(x,u,p,S,r)^T. \label{eq:dw10}
\end{equation}
The internal energy density $\varepsilon=\varepsilon(\rho,S)$, $p=p(\rho,S)$ and 
$w=w(\rho,S)$. However, we prefer to use $p$ and $S$ as independent variables to 
specify the state of the gas. We introduce the function $\tilde{w} (p,S)=w(\rho,S)$
in our analysis. By equating $d\tilde{w}(p,S)$ and $dw(\rho,S)$ we obtain the partial 
differential relations:
\begin{equation}
\tilde{w}_p a^2=w_\rho,\quad \tilde{w}_p p_S+\tilde{w}_S=w_S. 
\label{eq:dw11}
\end{equation}
Using the thermodynamic relations (\ref{eq:2.4})-(\ref{eq:2.7}) we obtain:
\begin{align}
\tilde{w}_p=&\frac{w_\rho}{a^2}=\tau, \nonumber\\
\tilde{w}_S=&w_S-p_S\frac{w_\rho}{a^2}=\frac{\varepsilon_S}{\rho}=T, \label{eq:dw12}
\end{align}
where $T$ is the temperature of the gas. 

Consider the variations of the action $\tilde{A}$  (\ref{eq:dw1}), but with 
$\hat{L}$ replaced by $\tilde{L}$. Using (\ref{eq:dw9}) for $\tilde{L}$ we obtain:
\begin{align}
\frac{\delta{\tilde A}}{\delta x}=&-(u_t+p_m)=0,\quad \frac{\delta{\tilde A}}{\delta u}=(x_t-u)=0, \nonumber\\
\frac{\delta{\tilde A}}{\delta p}=&x_m-\tilde{w}_p=(x_m-\tau)=0, \nonumber\\
\frac{\delta{\tilde A}}{\delta S}=&-\tilde{w}_S-r_t=-(r_t+T)=0,\quad \frac{\delta{\tilde A}}{\delta r}=S_t=0. 
\label{eq:dw13}
\end{align}
Thus, the variational equations (\ref{eq:dw4})-(\ref{eq:dw8}) give the same variational equations as 
(\ref{eq:dw13}) based on the variational derivatives of $\tilde{A}$ and $\tilde{L}$. Both 
approaches  
give the basic equations of 1D Lagrangian gas dynamics, namely:
\begin{equation}
u_t+p_m=0,\quad x_t=u,\quad x_m=\tau,\quad S_t=0,\quad r_t+T=0.  \label{eq:dw14}
\end{equation}
The equation for the Lagrange multiplier $r$ is similar to the Lagrange multiplier 
equation for $\beta/\rho$ in the Eulerian Clebsch variable formulation (e.g. \cite{Zakharov97},
\cite{Morrison98},\cite{Holm83a,Holm83b}, 
\cite{Webb14c}).  
Note that $S_t=0$ following the flow implies $S=S(m)$.  A similar variational principle 
to (\ref{eq:dw9}) was used by \cite{Webb09} in a paper on conservation laws in 1D gas dynamics
(equation (6.20) of that paper). 
\subsection{The de Donder-Weyl formulation}
{\bf A general description of the de-Donder Weyl equations and multi-symplectic geometry 
has been given for example by \cite{Paufler02}, who use the language of fiber bundles. 
We present an elementary derivation of the general form of the de Donder-Weyl Hamiltonian equations 
below. The basic results concern an action principle of the form:
\begin{equation}
J=\int {\cal L}(x^\mu,\varphi^i,\partial_\mu\varphi^i) d{\bf x}, \label{eq:ddw1}
\end{equation}
where the $x^\mu$ are the independent variables, and the $\varphi^i$ are the dependent field variables.
The Euler Lagrange equations for the variational principle (\ref{eq:ddw1}) are:
\begin{equation}
\frac{\delta J}{\delta \varphi^i}=\deriv{\cal L}{\varphi^i} 
-\derv{x^\mu}\left(\deriv{\cal L}{(\partial_\mu\varphi^i)}\right)=0. \label{eq:ddw2}
\end{equation}
The poly-momenta for the system are defined as:
\begin{equation}
\pi^\mu_i=\deriv{\cal L}{(\partial_\mu\varphi^i)}, \label{eq:ddw3}
\end{equation}
In the Hamiltonian description, the multi-symplectic Hamiltonian $H(x^\mu,\varphi^i,\pi^\mu_i)$
is governed by the generalized Legendre transformation: 
\begin{equation}
H=\pi^\mu_i\deriv{{\varphi}^i}{x^\mu}-{\cal L}, \label{eq:ddw4}
\end{equation}
Proceeding as in classical mechanics (e.g. \cite{Goldstein80}), one obtains the balance equations:
\begin{align}
dH=&d\pi^\mu_i \varphi^i_{,\mu}+\pi^\mu_i d\left(\varphi^i_{,\mu}\right)
-\left(\deriv{\cal L}{x^\mu}dx^\mu+\deriv{\cal L}{\varphi^i} d\varphi^i
+\deriv{\cal L}{(\varphi^i_{,\mu})} d\varphi^i_{,\mu}\right)\nonumber\\
=&\deriv{H}{x^\mu} dx^\mu+\deriv{H}{\varphi^i} d\varphi^i+\deriv{H}{\pi^\mu_i}d\pi^\mu_i. 
\label{eq:ddw5}
\end{align}
Equating the coefficients of $dx^\mu$, $d\pi^\mu_i$, $d\varphi^i_{,\mu}$ and $d\varphi^i$ 
gives the balance equations:
\begin{align}
\deriv{H}{x^\mu}+\deriv{\cal L}{x^\mu}=&0, \label{eq:ddw6}\\
\varphi^i_{,\mu}-\deriv{H}{\pi^\mu_i}=&0, \label{eq:ddw7}\\
\pi^\mu_i-\deriv{\cal L}{\varphi^i_{,\mu}}=&0, \label{eq:ddw8}\\
\deriv {\cal L}{\varphi^i}+\deriv{H}{\varphi^i}=&0. \label{eq:ddw9}
\end{align}

Using the Euler Lagrange equations (\ref{eq:ddw2}), the definitions (\ref{eq:ddw3}) 
of the poly-momenta $\pi^\mu_i$ and (\ref{eq:ddw7}) and (\ref{eq:ddw9}) we obtain 
the de Donder-Weyl Hamiltonian equations:
\begin{equation}
\deriv{\varphi^i}{x^\mu}=\deriv {H}{\pi^\mu_i}, 
\quad \derv{x^\mu} \left(\pi^\mu_i\right)=-\deriv{H}{\varphi^i},  
 \label{eq:ddw10}
\end{equation}
which are analogous to Hamilton's equations in ordinary Hamiltonian mechanics, in which the time $t$
is the evolution variable. A more complete description of the de Donder-Weyl equations 
using fiber bundles and jet bundles is given by \cite{Paufler02}.  
 
Below, we use the above approach to multi-symplectic systems,  
 to determine the de Donder-Weyl equations for 1D Lagrangian gas dynamics.} 
We introduce two momenta corresponding to $x(m,t)$ 
and a momentum associated with $S_t$ via the momentum maps:
\begin{equation}
\pi^t_x=\deriv{\tilde{L}}{x_t},\quad \pi^m_x=\deriv{\tilde L}{x_m} \quad\hbox{and}\quad \pi^t_S=\deriv{\tilde L}{S_t}, \label{eq:4.9}
\end{equation}
where the Lagrangian ${\tilde L}$ is given by (\ref{eq:dw9}). We find:
\begin{equation}
\pi^t_x=\deriv{\tilde L}{x_t}=u,\quad \pi^m_x=\deriv{\tilde L}{x_m}=p,\quad \pi^t_S=\deriv{\tilde L}{S_t}=r. \label{eq:4.10}
\end{equation}

Introduce the Hamiltonian $H$ by the generalized Legendre transformation:
\begin{align}
H=&\pi^t_x x_t+\pi^m_x x_m+\pi^t_S S_t-\tilde{L},\nonumber\\
=&u x_t+px_m+rS_t-\left(-\frac{1}{2} u^2+u x_t+px_m+r S_t-w\right)\nonumber\\
=&\frac{1}{2} u^2+w, \label{eq:4.11}
\end{align}
where $w=(\varepsilon+p)/\rho$ is the gas enthalpy. The Hamiltonian (\ref{eq:4.11}) 
is the Eulerian energy flux per unit mass flux, i.e. 
\begin{equation}
\derv{t}\left[\frac{1}{2}\rho u^2+\varepsilon(\rho,S)\right] +\derv{x}\left[\rho u H\right]=0,
\label{eq:4.12}
\end{equation}
is the Eulerian energy conservation equation.  $H$ in (\ref{eq:4.11}) can be expressed solely 
in terms of the canonical momenta $\pi^t_x=u$, $\pi^m_x=p$ and $S$.  

{\bf In the general de Donder-Weyl theory, the basic idea is} the generalized Legendre 
transformation:
\begin{equation}
H=\pi^t_x x_t+\pi^m_x x_m+\pi^t_S S_t-\tilde{L}(x,x_t,x_m,S,p,r,u,S_t), \label{eq:4.16}
\end{equation}
where the Hamiltonian density $H$ has the form:
\begin{equation}
H=H(x,S,\pi^t_x,\pi^m_x,\pi^t_S). \label{eq:4.16a}
\end{equation}
From (\ref{eq:4.16}) and (\ref{eq:4.16a}) we obtain: 
\begin{align}
dH=&\deriv{H}{x}dx+\deriv{H}{S}dS+\deriv{H}{\pi^t_x}d\pi^t_x+\deriv{H}{\pi^m_x}d\pi^m_x
+\deriv{H}{\pi^t_S}d\pi^t_S\nonumber\\
=&\biggl(x_td\pi^t_x+\pi^t_x dx_t+x_m d\pi^m_x+\pi^m_x dx_m+S_td\pi^t_S+\pi^t_S dS_t\biggr)\nonumber\\
&-\biggl(\deriv{\tilde L}{x}dx+\deriv{\tilde L}{x_t}dx_t+\deriv{\tilde L}{x_m}dx_m+\deriv{\tilde L}{S}dS+\deriv{\tilde L}{S_t}dS_t
+\deriv{\tilde L}{u} du+\deriv{\tilde L}{r}dr+\deriv{\tilde L}{p}dp\biggr). \label{eq:4.17}
\end{align}
From the Euler Lagrange equations (\ref{eq:dw13}) we have:
\begin{align}
&\deriv{\tilde L}{u}=\frac{\delta \tilde{A}}{\delta u}=x_t-u=0, \nonumber\\
&\deriv{\tilde L}{r}=\frac{\delta \tilde{A}}{\delta r}=S_t=0, \nonumber\\
&\deriv{\tilde L}{p}=\frac{\delta \tilde{A}}{\delta p}=x_m-\tau=0. \label{eq:4.17a}
\end{align}
Equating coefficients of $dx_t$, $dx_m$, and $dS_t$ in (\ref{eq:4.17}) gives:
\begin{equation}
\pi^t_x=\deriv{\tilde L}{x_t},\quad \pi^m_x=\deriv{\tilde L}{x_m},\quad \pi^t_S=\deriv{\tilde L}{S_t}, \label{eq:4.18}
\end{equation}
which are the usual definitions for the canonical momenta. Equating the coefficients of 
$d\pi^t_x$, $d\pi^m_x$ and $d\pi^t_S$ in (\ref{eq:4.17}) gives:
\begin{equation}
x_t=\deriv{H}{\pi^t_x},\quad x_m=\deriv{H}{\pi^m_x},\quad S_t=\deriv{H}{\pi^t_S},  \label{eq:4.19}
\end{equation}
which are the canonical evolution equations of $x$ with respect to $t$ and $h$ and the evolution 
equation for $S$ with respect to time. Equating the coefficients of $dx$ and $dS$ equal to zero 
gives the equations:
\begin{equation}
\deriv{\tilde L}{x}+\deriv{H}{x}=0\quad\hbox{and}\quad \deriv{\tilde L}{S}+\deriv{H}{S}=0. \label{eq:4.20}
\end{equation}
In the present example $\partial \tilde{L}/\partial x=\partial H/\partial x=0$  and 
$\partial H/\partial S=\partial\tilde{w}/\partial S=T$ and $\partial \tilde{L}/\partial S=-T$, 
which verifies the validity of (\ref{eq:4.20}). 
The Euler Lagrange equation $\delta{\tilde A}/\delta x=0$ in (\ref{eq:dw13}) can be written in the form:
\begin{equation}
\derv{t}\left(\deriv{\tilde{L}}{x_t}\right)+\derv{m}\left(\deriv{\tilde{L}}{x_m}\right)
-\deriv{\tilde{L}}{x}=0\quad\hbox{or}\quad \deriv{\pi^t_x}{t}+\deriv{\pi^m_x}{h}=-\deriv{H}{x}=0. \label{eq:4.21}
\end{equation}
This equation is `canonically conjugate' to the equations:
\begin{align}
x_t=&\deriv{H}{\pi^t_x}=\deriv{H}{u}=u,\nonumber\\
 x_m=&\deriv{H}{\pi^m_x}=\deriv{H}{p}=\tau. \label{eq:4.22}
\end{align}
The canonically conjugate equations for $\pi^t_S=r$ and $S$ are:
\begin{align}
\deriv{\pi^t_S}{t}=&-\deriv{H}{S}\quad\hbox{or}\quad
 \deriv{r}{t}=-T, \nonumber\\
\deriv{S}{t}=&\deriv{H}{r}=0. \label{eq:4.23}
\end{align}
Thus, the de Donder-Weyl formulation (\ref{eq:4.21})-(\ref{eq:4.23}) are the generalization of Hamilton's equations 
for the multi-momentum case. These equations allow for canonical momenta associated with both the space 
and time (i.e. $m$ and $t$) gradients. Also (\ref{eq:4.21}) for the momentum $\boldsymbol{\pi}_x=(\pi^t_x,\pi^m_x)^T$ 
can be written in the form:
\begin{equation}
\nabla{\bf\cdot}\boldsymbol{\pi}_x=-\deriv{H}{x}\quad\hbox{where}\quad \nabla=(\partial_t,\partial_m)^T, 
\label{eq:4.23a}
\end{equation}
is the gradient operator in $(t,m)$ space.   
%subsection4.2
\subsection{Multi-symplectic approach}
Following \cite{Hydon05}, we introduce the notation:
\begin{equation}
{\bf z}=(x,u,p,S,r)^T\equiv(x,\pi^t_x,\pi^m_x,S,\pi^t_S)^T, \label{eq:4.24}
\end{equation}
as the dependent variables describing the system. We introduce the one-forms:
\begin{equation}
\omega^\alpha=L_s^\alpha dz^s, \quad\alpha=0,1,\quad s=1,2, \label{eq:4.25}
\end{equation}
where 
\begin{equation}
L_s^\alpha \deriv{z^s}{x^\alpha}=u\deriv{x}{t}+p\deriv{x}{m}+r\deriv{S}{t}, \label{eq:4.26}
\end{equation}
are the terms in the Legendre transform (\ref{eq:4.11}) relating $H$ and $L$ (note the $L^\alpha_s$ 
are the canonical momenta in the de Donder-Weyl approach). 
{\bf In (\ref{eq:4.26}) $x^0=t$ and $x^1=m$.}  
From (\ref{eq:4.25})-(\ref{eq:4.26}) we identify the one forms:
\begin{equation}
\omega^0=udx+rdS\equiv L_1^0 dx+L^0_4 dS, \quad
\omega^1=pdx\equiv L_1^1 dx, \label{eq:4.27}
\end{equation}
as the fundamental one forms describing the system. 

In multi-symplectic systems, the fundamental
2-forms $\kappa^\alpha=d\omega^\alpha$ are given by:
\begin{equation}
\kappa^\alpha=d\omega^\alpha=d(L^\alpha_j dz^j)=\frac{1}{2}{\sf K}^\alpha_{ij} dz^i\wedge dz^j. \label{eq:4.28}
\end{equation}
Thus, 
\begin{equation}
{\sf K}^\alpha_{ij}=\deriv{L^\alpha_j}{z^i}-\deriv{L^\alpha_i}{z^j}, \label{eq:4.29}
\end{equation}
where the matrices ${\sf K}^\alpha_{ij}$ are skew symmetric with respect to the 2 lower indices 
(e.g. \cite{Hydon05}, \cite{Cotter07}, 
\cite{Webb14c}). The condition $\nabla_\alpha\kappa^\alpha=0$,   
 corresponds to the symplecticity or 
conservation of phase space condition for multi-symplectic systems. Here $\nabla_\alpha$ denotes covariant 
differentiation with respect to the independent variables $x^\alpha$ (i.e. $m$ and $t$).  

Taking the exterior derivatives of $\omega^0$ and $\omega^1$ in (\ref{eq:4.27}) gives the 2-forms:
\begin{align}
d\omega^0=&du\wedge dx+dr\wedge dS=dz^2\wedge dz^1+dz^5\wedge dz^4=\frac{1}{2}{\sf K}^0_{ij}dz^i\wedge dz^j, \nonumber\\
d\omega^1=&dp\wedge dx=dz^3\wedge dz^1=\frac{1}{2}{\sf K}^1_{ij}dz^i\wedge dz^j. \label{eq:4.30}
\end{align}
Thus, we identify the non-zero components of the skew-symmetric matrices ${\sf K}^\alpha_{ij}$ as:
\begin{equation}
{\sf K}^0_{21}=1,\quad {\sf K}^0_{12}=-1, \quad {\sf K}^0_{54}=1,\quad {\sf K}^0_{45}=-1, \quad {\sf K}^1_{31}=1,\quad {\sf K}^1_{13}=-1. \label{eq:4.31}
\end{equation}
From the general theory of multi-symplectic systems (\cite{Hydon05}) it follows that (\ref{eq:dw14})
can be written in the multi-symplectic form:
\begin{equation}
{\sf K}^0_{ij}\deriv{z^j}{t}+{\sf K}^1_{ij}\deriv{z^j}{x}=\frac{\delta {\cal H}}{\delta z^i}, \label{eq:4.32}
\end{equation}
where ${\cal H}$ is the Hamiltonian functional:
\begin{equation}
 {\cal H}=\int H dm. \label{eq:4.33}
\end{equation}
In the present analysis 
\begin{equation}
\frac{\delta {\cal H}}{\delta {\bf z}}=\deriv{H}{\bf z}=\left(\deriv{H}{x},\deriv{H}{u},\deriv{H}{p},\deriv{H}{S},\deriv{H}{r}\right)^T 
=\left(0,u,\tau,T,0\right)^T.
\label{eq:4,34}
\end{equation}
The multi-symplectic system (\ref{eq:4.32}) reduces to the matrix system
\begin{equation}
\left(\begin{array}{ccccc}
0&-1&0&0&0\\
1&0&0&0&0\\
0&0&0&0&0\\
0&0&0&0&-1\\
0&0&0&1&0
\end{array}
\right) \derv{t}\left(\begin{array}{c}
x\\
u\\
p\\
S\\
r
\end{array}
\right)
+\left(\begin{array}{ccccc}
0&0&-1&0&0\\
0&0&0&0&0\\
1&0&0&0&0\\
0&0&0&0&0\\
0&0&0&0&0
\end{array}
\right)
\derv{m}
\left(\begin{array}{c}
x\\
u\\
p\\
S\\
r
\end{array}
\right)
=\deriv{H}{\bf z}. \label{eq:4.35}
%\left(\begin{array}{c}
%0\\
%u\\
%\tau\\
%T\\
%0
%\end{array}
%\right). \label{eq:4.35}
\end{equation}
The component equations in (\ref{eq:4.35}) are:
\begin{equation}
-(u_t+p_m)=0,\quad x_t=u,\quad x_m=\tau,\quad r_t=-T, \quad S_t=0, \label{eq:4.36}
\end{equation}
which is the system (\ref{eq:dw14}) for 1D, Lagrangian gas dynamics. Note that ${\sf K}^0_{ij}$ 
and ${\sf K}^1_{ij}$ are skew symmetric matrices. 
%section4.3
\subsection{Pullback and symplecticity conservation laws}
From \cite{Hydon05} (see also \cite{Webb14c} and Appendix A), the multi-symplectic system 
(\ref{eq:4.35}) satisfies the pullback conservation laws:
\begin{equation}
D_\alpha\left(L^\alpha_j z^j_{,\beta}-L\delta^\alpha_\beta\right)=0,\quad \alpha,\beta=0,1, \label{eq:4.40}
\end{equation}
where $(x^0,x^1)=(t,m)$ in the present application. The pullback conservation laws can also be derived 
from Noether's theorem and correspond to invariance of the action under translations 
of the independent variables (i.e. the $x^\beta$). 
In (\ref{eq:4.40}) the Lagrangian density $L$ is given by:
\begin{equation}
L=\frac{1}{2}u^2-\varepsilon\tau, \label{eq:4.41}
\end{equation}
where the Lagrangian $L$ is the same as in (\ref{eq:dw2}) except the constraint terms are omitted in (\ref{eq:4.41}). 
The pullback conservation laws are a consequence of the generalized Legendre transformation (e.g. \cite{Hydon05}).

For $\beta=0$, the pullback conservation law (\ref{eq:4.40}) reduces to:
\begin{equation}
D_t I^0+D_m I^1=0, \label{eq:4.42} 
\end{equation}
where
\begin{equation}
I^0=L_j^0z^j_{,0}-L,\quad I^1=L^1_j z^j_{,0}. \label{eq:4.43}
\end{equation}
Using (\ref{eq:4.27}) for $\omega^0$ and $\omega^1$ we obtain:
\begin{align}
I^0=&u\deriv{x}{t}+r\deriv{S}{t}-\left(\frac{1}{2} u^2-\varepsilon\tau\right)\equiv \frac{1}{2} u^2+\varepsilon\tau, 
\nonumber\\
I^1=&p \deriv{x}{t}\equiv p u. \label{eq:4.44}
\end{align}
Thus, we obtain the conservation law:
\begin{equation}
\derv{t}\left(\frac{1}{2} u^2+\varepsilon\tau\right)+\derv{m}\left(pu\right)=0, \label{eq:4.45}
\end{equation}
which is the co-moving energy equation. Using (\ref{eq:3.6}), we obtain:
\begin{equation}
\derv{m}=\deriv{x_0}{m}\derv{x_0}\equiv \frac{1}{\rho_0}\derv{x_0}. \label{eq:4.46}
\end{equation}
Thus, (\ref{eq:4.46}) can also be written in the form:
\begin{equation}
\derv{t}\left[\rho_0\left(\frac{1}{2} u^2+\varepsilon\tau\right)\right] +
\derv{x_0}\left(p u\right)=0. \label{eq:4.47}
\end{equation}

Using 
\begin{equation}
I^0=\rho_0\left(\frac{1}{2} u^2+\varepsilon\tau\right),\quad I^1=pu, \label{eq:4.48}
\end{equation}
for the Lagrangian conserved density $I^0$ and flux $I^1$ in (\ref{eq:4.47}) 
and using 
the transformations for the conserved Eulerian density $F^0$ and flux components $F^j$:
\begin{equation}
F^0=\frac{I^0}{J},\quad F^j=\frac{u^j I^0+x_{jk} I^k}{J}, \label{eq:4.49}
\end{equation}
where $x_{jk}=\partial x^j/\partial x_0^k$ and $J=\det (x_{jk})$ (\cite{Padhye96a, Padhye96b}, \cite{Padhye98}, 
\cite{Webb05}) where $1\leq j,k\leq n$ (note $n=1$ in our case), we obtain the Eulerian energy conservation law:
\begin{equation}
\derv{t}\left[\frac{1}{2}\rho u^2+\varepsilon(\rho,S)\right]
+\derv{x}\left[\rho u\left(\frac{1}{2} u^2+w\right)\right]=0, \label{eq:4.50}
\end{equation}
where $w=(\varepsilon+p)/\rho$ is the gas enthalpy. 

For the case $\beta=1$, the pullback conservation law (\ref{eq:4.40}) reduces to:
\begin{equation}
\deriv{I^0}{t}+\deriv{I^1}{m}=\derv{t}\left(u\tau+rS_m\right)
+\derv{m}\left(w-\frac{1}{2}u^2\right)=0. \label{eq:4.51}
\end{equation}
This conservation law is due to translation invariance of the action with respect
to $m$ in Noether's theorem. Translation in $m$ is a fluid relabelling 
symmetry. This is distinctly different than a translation in the Eulerian position variable  $x$,
which is associated with the conservation of linear momentum in Eulerian coordinates $(t,x)$
as independent variables. 
Using  the transformations (\ref{eq:4.49}) for the case $n=1$ (i.e 1D gas dynamics), we obtain the 
Eulerian, version of the conservation law (\ref{eq:4.51}), namely:
\begin{equation}
\derv{t}\left(u+r\deriv{S}{x}\right)+\derv{x}\left(w+\frac{1}{2}u^2+ur\deriv{S}{x}\right)=0. \label{eq:4.52}
\end{equation}
 Since $dr/dt=-T(x,t)$ then the 
Clebsch variable $r$ is a nonlocal variable, i.e. 
\begin{equation}
r(x,t)=-\int_0^t T(x,t') dt'+r_0({\bf x}_0), \label{eq:4.53}
\end{equation}
is minus the Lagrangian time integral of the temperature  back along the fluid path ($r_0({\bf x}_0)$
is an `integration constant'). 
By noting $dS/dt=(\partial/\partial t+u\partial x)S=0$ and using $dr/dt=-T$ equation (\ref{eq:4.52}) 
gives the Eulerian momentum equation for the fluid as:
\begin{equation}
\deriv{u}{t}+u\deriv{u}{x}=-\frac{1}{\rho}\deriv{p}{x}. \label{eq:4.54}
\end{equation}
Using the mass continuity equation (\ref{eq:2.1}), (\ref{eq:4.54}) can also be written in the momentum 
conservation form:
\begin{equation}
\deriv{(\rho u)}{t}+\derv{x}\left(\rho u^2+p\right)=0. \label{eq:4.55}
\end{equation}
What is perhaps of most interest here, is the existence of a nonlocal conservation law (\ref{eq:4.52})
involving the nonlocal Clebsch potential $r(x,t)$ (see e.g. \cite{Webb14a, Webb14b} for further 
nonlocal conservation laws involving a generalized nonlocal cross helicity conservation equation 
in MHD and a generalized nonlocal helicity conservation equation involving $r({\bf x},t)$ and the entropy 
$S$ of the fluid).  

The $m$-translation conservation law (\ref{eq:4.52}) is connected to the Clebsch variable formulation of fluid mechanics 
(e.g. \cite{Zakharov97}, \cite{Morrison98}, \cite{Webb14c}). In the Clebsch approach the fluid velocity $u$ has the 
form:
\begin{equation}
u=\deriv{\phi}{x}-r\deriv{S}{x}\quad \hbox{where}\quad r=\frac{\beta}{\rho}, \label{eq:4.55a}
\end{equation}
Here $(\beta,S)$ and $(\rho,\phi)$ are canonically conjugate variables and $\phi$ is the velocity potential 
corresponding to potential flow  (e.g. \cite{Zakharov97}). The potential $\phi$ satisfies Bernoulli's 
equation:
\begin{equation}
\deriv{\phi}{t}+u \deriv{\phi}{x}=\frac{1}{2} u^2-w. \label{eq:4.55b}
\end{equation}
(\ref{eq:4.52}) has the conservation equation form:
\begin{equation}
D_t+F_x=0, \label{eq:4.55c}
\end{equation}
in which:
\begin{equation}
D=u+r\deriv{S}{x}=\deriv{\phi}{x}, \quad F=w+\frac{1}{2}u^2+ur\deriv{S}{x}=-\deriv{\phi}{t}. \label{eq:4.55d}
\end{equation} 
Thus, the conservation law (\ref{eq:4.52}), written in terms of the velocity potential $\phi$ reduces to the 
equation $\phi_{xt}-\phi_{tx}=0$, i.e. (\ref{eq:4.52}) can be written as the integrability condition 
$\phi_{xt}-\phi_{tx}=0$ for $\phi(x,t)$.  
Similarly, the Lagrangian form of the conservation law (\ref{eq:4.51}) 
can be written in the form: $\phi_{\theta m}-\phi_{m \theta}=0$ where 
$\partial/\partial \theta=\partial/\partial t+u\partial/\partial x$ is the Lagrangian time 
derivative following the flow. 
Note that
\begin{equation}
\phi_m=\tau (u+r S_x)\quad\hbox{and}\quad \phi_\theta=\frac{1}{2} u^2-w. \label{eq:4.55e}
\end{equation}
 
\subsubsection{Symplecticity conservation law}
The symplecticity conservation law for the multi-symplectic system (\ref{eq:4.32}) is the set of 
conservation laws:
\begin{equation}
D_\alpha\left(F^\alpha_{\beta\gamma}\right)=0, \quad \beta<\gamma, \label{eq:4.56}
\end{equation}
where
\begin{equation}
F^\alpha_{\beta\gamma}={\bf z}^T_{,\beta}{\sf K}^\alpha {\bf z}_{,\gamma}
\equiv {\sf K}^\alpha_{ij}z^i_{,\beta}z^j_{,\gamma}. \label{eq:4.57}
\end{equation}
These conservation laws can be derived from the pullback (i.e. treat the dependent variables 
as functions of the independent variables) of the symplecticity conservation condition 
$\kappa^\alpha_{,\alpha}=0$, where
\begin{equation}
\kappa^\alpha=d\omega^\alpha=d\left(L^\alpha_j dz^j\right)=\frac{1}{2}{\sf K}^\alpha_{ij}dz^i\wedge dz^j, 
\label{eq:4.58}
\end{equation}
(Hydon (2005), see also Appendix A). The symplecticity conservation laws (\ref{eq:4.56}) are related to 
the pullback conservation laws (\ref{eq:4.40}) by the equations:
\begin{equation}
D_\gamma G_\beta-D_\beta G_\gamma=D_\alpha\left(F^\alpha_{\gamma\beta}\right), \quad\gamma<\beta, \label{eq:4.59}
\end{equation}
where
\begin{equation}
G_\beta=D_\alpha\left(L^\alpha_j z^j_{,\beta}-L\delta ^\alpha_\beta\right). \label{eq:4.60}
\end{equation}
Note that $G_\beta=0$ are the pullback conservation laws (\ref{eq:4.40}). 

From (\ref{eq:4.56})-(\ref{eq:4.60}), there is only one symplecticity conservation law in the present case. 
The pullback symplecticity conservation law (\ref{eq:4.56}) reduces 
to the equation:
\begin{equation}
D_\alpha\left(F^\alpha_{01}\right)=D_t\left(u_tx_m-x_t u_m+r_t S_m\right) +D_m\left(p_t x_m-x_t p_m\right)=0. 
\label{eq:4.61}
\end{equation}
{\bf The symplecticity conservation law (\ref{eq:4.61}) can be written in the form:
\begin{equation}
\deriv{D}{t}+\deriv{F}{m}=\derv{t}\left[\frac{\partial(u,x)}{\partial(t,m)}
+\frac{\partial(r,S)}{\partial(t,m)}\right]
+\derv{m}\left(\frac{\partial(p,x)}{\partial(t,m)}\right)=0, \label{eq:4.63a}
\end{equation}
where $\partial(\phi,\psi)/\partial(\alpha,\beta)$ is the Jacobian of $\phi$ and $\psi$ with respect to 
$\alpha$ and $\beta$. One can show that (\ref{eq:4.63a}) can be written in the form:
\begin{equation}
\deriv{D}{t}+\deriv{F}{m}= -\derv{m}\left[\derv{t}\left(\frac{1}{2} u^2+e\right)+\derv{m}\left(pu\right)\right]=0, \label{eq:4.63b}
\end{equation}
where $e=\varepsilon\tau$  is the internal energy density of the gas per unit mass and $\tau=1/\rho$. 
Equation (\ref{eq:4.63b}) is minus the partial derivative with respect to $m$ 
of the co-moving energy equation (\ref{eq:4.45}). 
} 
The conserved density $D$ and flux $F$ in (\ref{eq:4.63a}) can also be reduced to:
\begin{equation}
D=-H_m,\quad F=H_t, \label{eq:4.63c}
\end{equation}
where $H=(1/2)u^2+w$ is the multi-symplectic Hamiltonian. Thus, 
the symplecticity conservation law (\ref{eq:4.63a}) 
is also equivalent to the equation $-H_{mt}+H_{tm}=0$. 
%subsection 4.4
\subsection{Noether's theorem}
A multi-symplectic form of Noether's theorem was described by \cite{Webb14c}. The basic idea is 
that if the action:
\begin{equation}
A=\int\int L\ dm dt, \label{eq:4.64}
\end{equation}
is invariant under the infinitesimal Lie transformations:
\begin{equation}
m'=m+\epsilon V^m,\quad t'=t+\epsilon V^t,\quad  z^{'s}=z^s+\epsilon V^{z^s}, \label{eq:4.65}
\end{equation}
and under the divergence transformation:
\begin{equation}
L'=L+\epsilon D_\alpha \Lambda^\alpha, \label{eq:4.66}
\end{equation}
then Noether's theorem for the multi-symplectic system implies the conservation law:
\begin{equation}
D_t\left(V^t L+\hat{V}^{z^s} L^0_s+\Lambda^0\right)+D_m\left(V^m L+\hat{V}^{z^s} L^1_s+\Lambda^1\right)=0, 
\label{eq:4.67}
\end{equation}
where
\begin{equation}
\hat{V}^{z^s}=V^{z^s}-\left(V^tD_t+V^mD_m\right)z^s, \label{eq:4.68}
\end{equation}
is the canonical or characteristic symmetry generator for transformations of $z^s$ of the form:
$z^{'s}=z^s+\epsilon \hat{V}^{z^s}$,\ $t'=t$,\ $m'=m$ that are equivalent to the transformations 
(\ref{eq:4.65})-(\ref{eq:4.66}). 

\begin{example}
Consider the infinitesimal Lie transformations (\ref{eq:4.65})-(\ref{eq:4.66}) with:
\begin{equation}
V^t=0,\quad V^m=1,\quad V^x=0,\quad \Lambda^\alpha=0,\quad (\alpha=0,1), \label{eq:4.69}
\end{equation} 
Here ${\bf z}=(x,u,p,S,r)^T$ (see (\ref{eq:4.24})). The action is invariant under the transformations
with generators (\ref{eq:4.69}), and with the $L^\alpha_j$ given in (\ref{eq:4.27}) 
where $\omega^\alpha=L^\alpha_j dz^j$ and $L^0_1=u$, $L^0_4=r$, $L^1_1=p$ are the non-zero $L^\alpha_j$. 
Using (\ref{eq:4.68}) and (\ref{eq:4.69}) we obtain: 
$\hat{V}^x=-x_m$, $\hat{V}^S=-S_m$. The Noether conservation law (\ref{eq:4.67})
gives:
\begin{equation}
D_t\left(\hat{V}^x u+\hat{V}^S r\right)+D_m\left(L+\hat{V}^x p\right)=0, \label{eq:4.70}
\end{equation}
which reduces to the nonlocal conservation law (\ref{eq:4.51}). In other words, (\ref{eq:4.51}) follows 
from the $m$-translation invariance of the action (\ref{eq:4.64}) and Noether's theorem. 

Similarly, the Lie transformations (\ref{eq:4.65})-(\ref{eq:4.66}) with:
\begin{equation}
V^t=1,\quad V^m=0,\quad V^x=0,\quad \Lambda^0=\Lambda^1=0, \label{eq:4.71}
\end{equation}
leaves the action (\ref{eq:4.64}) invariant under time translations and give rise to the energy 
conservation law (\ref{eq:4.45}). The transformations (\ref{eq:4.65})-(\ref{eq:4.66}) with:
\begin{equation}
V^t=0,\quad V^m=0, \quad V^x=1,\quad \Lambda^0=\Lambda^1=0, \label{eq:4.72}
\end{equation}
leave the action (\ref{eq:4.64}) invariant, corresponding to the $x$-translation invariance symmetry 
and give rise, via Noether's theorem to the $x$-momentum conservation law:
\begin{equation}
u_t+p_m=0, \label{eq:4.73}
\end{equation}
which is the Lagrangian form of the $x$-momentum conservation equation. 
\end{example}
%subsection4.5
\subsection{Differential forms formulation}
\begin{proposition}
The condition that the action:
\begin{equation}
J=\int \psi^*(\Theta)\equiv \int \tilde{L} dV,  \label{eq:4.75}
\end{equation}
is stationary, namely $\delta J/\delta z^i=0$
where $\Theta$ is the 2-form defined by the equations:
\begin{align}
\Theta=&\omega^\alpha\wedge d\tilde{m}_\alpha-H dV, \label{eq:4.75aa}\\
dV=&dt\wedge dm, \quad d\tilde{m}_\alpha=\partial_\alpha\lrcorner dV, \quad d\tilde{m}_0=dm, \quad
d\tilde{m}_1=-dt, \label{eq:4.75ba}
\end{align}
( $m^0=t$ and $m^1=m$)
 gives the multi-symplectic system (\ref{eq:4.32})-(\ref{eq:4.35}). In (\ref{eq:4.75}) 
 $\tilde{L}$ is the multi-symplectic Lagrangian (\ref{eq:dw9}). 
We obtain:
\begin{align}
\Theta=&\omega^0\wedge dm-\omega^1\wedge dt-H dt\wedge dm\nonumber\\
\equiv&\pi_x^t dx\wedge d\tilde{m}_0+\pi^t_S dS\wedge d\tilde{m}_0+ \pi_x^m dx\wedge d\tilde{m}_1
-H dt\wedge dm,\nonumber\\
\equiv&(udx+rdS)\wedge dm-pdx\wedge dt-HdV.
\end{align}
Hence the pullback of $\Theta$ is given by:
\begin{equation}
\psi^*(\Theta)=(u x_t+r S_t+p x_m-H)dV\equiv 
\left(L^\alpha_s 
\deriv{z^s}{m^\alpha}-H\right)dV=\tilde{L} dV.
 \label{eq:4.75a}
\end{equation}
\end{proposition}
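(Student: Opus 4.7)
The plan is to verify the proposition in three stages: first compute $\Theta$ and its pullback explicitly, then extract the Euler--Lagrange equations, and finally check that these reproduce the multi-symplectic system (\ref{eq:4.32})--(\ref{eq:4.35}).

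For the first stage, I would substitute the one-forms $\omega^0=u\,dx+r\,dS$ and $\omega^1=p\,dx$ from (\ref{eq:4.27}) together with $d\tilde m_0=dm$ and $d\tilde m_1=-dt$ into $\Theta=\omega^\alpha\wedge d\tilde m_\alpha-H\,dV$. This gives the explicit expression displayed after (\ref{eq:4.75ba}), namely $\Theta=(u\,dx+r\,dS)\wedge dm-p\,dx\wedge dt-H\,dt\wedge dm$. The section map is $\psi:(t,m)\mapsto(t,m,z^s(t,m))$, under which $dx=x_t\,dt+x_m\,dm$ and $dS=S_t\,dt+S_m\,dm$. Wedging against $dm$ picks up only the $dt$-component and wedging against $dt$ picks up only the $dm$-component (with the appropriate sign), producing $\psi^*(\Theta)=(ux_t+rS_t+px_m-H)\,dV$. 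Using the Legendre identity $H=\pi^t_xx_t+\pi^m_xx_m+\pi^t_SS_t-\tilde L$ from (\ref{eq:4.16}), together with $\pi^t_x=u$, $\pi^m_x=p$, $\pi^t_S=r$, this integrand is exactly $\tilde L$, confirming the last displayed line (\ref{eq:4.75a}).

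For the second stage, I would take the first variation of $J=\int\tilde L\,dV$ using the form of $\tilde L$ in (\ref{eq:dw9}). Since $\tilde L$ is first order in the derivatives $x_t,x_m,S_t$ and algebraic in $u,p,r$, the variations with respect to the ``momentum'' variables produce the constraints
\begin{equation}
\delta u: x_t-u=0,\qquad \delta p: x_m-\tilde w_p=x_m-\tau=0,\qquad \delta r: S_t=0,
\end{equation}
the variations with respect to $x$ and $S$ produce the dynamical equations
\begin{equation}
\delta x: -(u_t+p_m)=-\partial H/\partial x=0,\qquad \delta S: -(r_t+\tilde w_S)=-(r_t+T)=0,
\end{equation}
reproducing (\ref{eq:dw13}). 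These are precisely the component equations (\ref{eq:4.36}) of the matrix system (\ref{eq:4.35}).

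For the third stage, I would rewrite the Euler--Lagrange equations in the skew-symmetric form ${\sf K}^0_{ij}z^j_{,t}+{\sf K}^1_{ij}z^j_{,m}=\partial H/\partial z^i$ by inspection of each row of (\ref{eq:4.35}), matching the pairs $(x,u)$, $(x,p)$, $(S,r)$ that appear in the nonzero components of ${\sf K}^0$ and ${\sf K}^1$ listed in (\ref{eq:4.31}). The main obstacle is largely bookkeeping rather than conceptual: one must be careful with signs in the pullback of $\omega^1\wedge d\tilde m_1=-\omega^1\wedge dt$ and with the fact that some Euler--Lagrange equations are algebraic constraints whose identification with rows of the multi-symplectic matrix requires recognizing that rows with vanishing left-hand side (the $p$-row has all zeros in ${\sf K}^0$, and the $u$-row has all zeros in ${\sf K}^1$) correspond to the constraint equations $x_m=\tau$ and $x_t=u$. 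Once this is verified row-by-row, the equivalence of $\delta J/\delta z^i=0$ with (\ref{eq:4.32})--(\ref{eq:4.35}) follows.
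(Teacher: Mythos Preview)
Your proposal is correct and follows essentially the same route as the paper's own proof: both identify $\psi^*(\Theta)=\tilde L\,dV$ and then pass through the Euler--Lagrange equations of $\tilde L$ to obtain the multi-symplectic system. The only difference is one of presentation: the paper compresses your stages two and three into a single line by invoking the general identity (due to Hydon) $\delta J/\delta z^i=\partial\tilde L/\partial z^i-\partial_{m^j}(\partial\tilde L/\partial(\partial_j z^i))={\sf K}^\alpha_{ij}\,\partial z^j/\partial m^\alpha-\partial H/\partial z^i$, whereas you carry out the variation and the row-by-row matching explicitly.
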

\begin{proof}
{\bf  The proof that $\delta J/\delta z^i=0$ gives the multi-symplectic system (\ref{eq:4.35}) 
follows by noting  $\psi^*(\Theta)=\tilde{L} dV$ and using the analysis of Hydon (2005) to obtain:
\begin{equation}
\frac{\delta J}{\delta z^i}=\deriv{\tilde{L}}{z^i}
-\derv{m^j}\left(\deriv{\tilde L}{(\partial_j z^i)}\right)
={\sf K}^\alpha_{ij}\deriv{z^j}{m^\alpha}-\deriv{H}{z^i}=0. 
\label{eq:4.77}
\end{equation}
}
\end{proof}
\begin{proposition}
Consider the variational functional 
\begin{equation}
K[\Omega]=\int_M \Omega, \label{eq:vbun1}
\end{equation}
where $M$ is a region with boundary $\partial M$ in the fiber bundle space in which the $z^i$ are 
regarded as independent of the base manifold coordinates $m^\alpha=(t,m)$. The form:
\begin{equation}
\Omega=d\Theta=d\omega^\alpha\wedge d\tilde{m}_\alpha-dH\wedge dV
\quad\hbox{where}\quad dV=dt\wedge dm, \label{eq:vbun2}
\end{equation}
is known as the Cartan-Poincar\'e form. Consider variations of the functional (\ref{eq:vbun1}) 
described by the Lie derivative:
\begin{equation}
{\cal L}_{\bf V}=\frac{d}{d\epsilon}=V^i\derv{z^i}, \label{eq:vbun3}
\end{equation}
in which the base manifold variables $m^\alpha$ are fixed. The vector field ${\bf V}$ 
is an arbitrary but smooth vector field. The variations of $K[\Omega]$ 
described by:
\begin{equation}
\delta K[\Omega]=\int_M {\cal L}_{\bf V}\left(\Omega\right), \label{eq:vbun4}
\end{equation}
can be reduced to the form:
\begin{equation}
\delta K[\Omega]=\int_{\partial M} V^p\beta_p, \label{eq:vbun5}
\end{equation}
where the forms $\{\beta_p:\ 1\leq p\leq N\}$ ($N$ is the number of $z^i$ variables) are given 
by:
\begin{equation}
\beta_p={\sf K}^\alpha_{pj} dz^j\wedge d\tilde{m}_\alpha-\deriv{H}{z^p} dV, \label{eq:vbun6}
\end{equation}
and $\partial M$ is the boundary of the region $M$ in the ${\bf z}$-space. 
The set of equations $\beta_p=0$ ($1\leq p\leq N)$, can be used as a basis of Cartan 
forms for the multi-symplectic system (\ref{eq:4.32})-(\ref{eq:4.35}). The pullback of the
differential forms 
 $\beta_p=0$ to the base manifold  gives the equations:
\begin{equation}
\tilde{\beta}_p=\left({\sf K}^\alpha_{pj}\deriv{z^j}{m^\alpha}-\deriv{H}{z^p}\right) dV. 
\label{eq:vbun7}
\end{equation}
The sectioned forms $\tilde{\beta}_p$ vanish on the solution manifold of the multi-symplectic 
partial differential equation system (\ref{eq:4.32})-(\ref{eq:4.35}).  
\end{proposition}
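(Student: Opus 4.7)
The plan is to use Cartan's magic formula on the Lie derivative. Since $\Omega=d\Theta$ is exact, one has $d\Omega=0$, so
\begin{equation}
\mathcal{L}_{\bf V}\Omega = {\bf V}\contr d\Omega + d({\bf V}\contr\Omega) = d({\bf V}\contr\Omega).
\end{equation}
Then Stokes' theorem immediately converts the variation into a boundary integral,
\begin{equation}
\delta K[\Omega]=\int_M d({\bf V}\contr\Omega)=\int_{\partial M} {\bf V}\contr\Omega,
\end{equation}
so the content of the proposition reduces to the explicit evaluation of the interior product ${\bf V}\contr\Omega$. This is the only step that requires computation; the rest is formal.

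The next step is to expand $\Omega$ using the definitions already established. From (\ref{eq:vbun2}) and (\ref{eq:4.28}) we may write
\begin{equation}
\Omega = \tfrac{1}{2}{\sf K}^\alpha_{ij}\, dz^i\wedge dz^j\wedge d\tilde{m}_\alpha - \deriv{H}{z^i}\, dz^i\wedge dV.
\end{equation}
Because ${\bf V}=V^i\,\partial/\partial z^i$ is vertical (it has no base-manifold components), the interior product annihilates $d\tilde{m}_\alpha$ and $dV$; only the $dz$-slots contribute. Using ${\bf V}\contr(dz^i\wedge dz^j)=V^i dz^j-V^j dz^i$ together with the skew-symmetry ${\sf K}^\alpha_{ij}=-{\sf K}^\alpha_{ji}$ collapses the factor of $1/2$, and relabelling $i\to p$ yields
\begin{equation}
{\bf V}\contr\Omega = V^p\left({\sf K}^\alpha_{pj}\, dz^j\wedge d\tilde{m}_\alpha - \deriv{H}{z^p}\, dV\right) = V^p\beta_p,
\end{equation}
which is precisely (\ref{eq:vbun6}). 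Substituting into the boundary integral gives the claimed formula (\ref{eq:vbun5}). The mildest obstacle here is bookkeeping: one must be careful that ${\bf V}\contr dV=0$ (otherwise a spurious term $V^p(\partial H/\partial z^p)$ factor involving a $dm$ or $dt$ would appear) and that the two contributions from the skew pair combine coherently.

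For the final assertion about $\tilde{\beta}_p=\psi^*\beta_p$, I would pull back using $\psi^*(dz^j)=(\partial z^j/\partial m^\beta)\,dm^\beta$ and the identity $dm^\beta\wedge d\tilde{m}_\alpha=\delta^\beta_\alpha\, dV$, which one checks directly from $d\tilde{m}_0=dm$, $d\tilde{m}_1=-dt$ against $dm^0=dt$, $dm^1=dm$. This gives
\begin{equation}
\tilde{\beta}_p=\left({\sf K}^\alpha_{pj}\deriv{z^j}{m^\alpha}-\deriv{H}{z^p}\right)dV,
\end{equation}
and the bracket is exactly the left-hand side minus the right-hand side of the multi-symplectic system (\ref{eq:4.32}), so it vanishes on the solution manifold. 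The fact that the $\beta_p$ form a basis for a closed ideal representing the PDE system follows because $d\beta_p$ can be expressed modulo the $\beta_q$ (a check I would make explicitly in a more detailed treatment), which is what makes them suitable for Cartan's geometric analysis.
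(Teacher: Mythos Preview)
Your proof is correct and follows essentially the same route as the paper: Cartan's magic formula plus $d\Omega=0$ reduces $\delta K[\Omega]$ to a boundary integral via Stokes, then an explicit interior-product computation using the skew-symmetry of ${\sf K}^\alpha_{ij}$ yields $V^p\beta_p$, and the pullback identity $dm^\beta\wedge d\tilde{m}_\alpha=\delta^\beta_\alpha\,dV$ gives (\ref{eq:vbun7}). Your remarks that ${\bf V}$ is vertical (so ${\bf V}\contr dV=0$) and that the two skew contributions combine to cancel the $1/2$ are exactly the bookkeeping points the paper relies on but leaves implicit.
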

\begin{proof}
{\bf To prove (\ref{eq:vbun5}) from (\ref{eq:vbun4}) we use Cartan's magic formula:
\begin{equation}
{\cal L}_{\bf V}\Omega={\bf V}\lrcorner d\Omega+d\left({\bf V}\lrcorner \Omega\right)
\equiv d\left({\bf V}\lrcorner \Omega\right), \label{eq:vbun8}
\end{equation}
where we have used the fact that $\Omega=d\Theta$ and $d\Omega=dd\Theta=0$. Using (\ref{eq:vbun8}) 
in (\ref{eq:vbun4}) and using Stokes' theorem, we obtain:
\begin{equation}
\delta K[\Omega]=\int_M d\left({\bf V}\lrcorner \Omega\right)
=\int_{\partial M} \left({\bf V}\lrcorner \Omega\right). \label{eq:vbun9}
\end{equation}
Using (\ref{eq:vbun2}) for $\Omega$ and noting that
\begin{equation}
d\omega^\alpha=\frac{1}{2} {\sf K}^{\alpha}_{ij} dz^i\wedge dz^j,
\quad dH=\deriv{H}{z^i} dz^i\wedge dV, 
\label{eq:vbun10}
\end{equation}
we obtain:
\begin{equation}
\Omega=\frac{1}{2} {\sf K}^{\alpha}_{ij} dz^i\wedge dz^j\wedge d\tilde{m}_\alpha
-\deriv{H}{z^i} dz^i\wedge dV. 
\label{eq:vbun11}
\end{equation}
Thus,
\begin{align}
{\bf V}\lrcorner\Omega=&V^p\derv{z^p}\lrcorner\left(\frac{1}{2} {\sf K}^\alpha_{ij} dz^i\wedge dz^j
\wedge d\tilde{m}_\alpha
-\deriv{H}{z^i} dz^i\wedge dV\right)\nonumber\\
=&V^p\left({\sf K}^\alpha_{pj} dz^j\wedge d\tilde{m}_\alpha
-\deriv{H}{z^p} dV\right)=V^p\beta_p. \label{eq:vbun12}
\end{align}
In deriving (\ref{eq:vbun12}) we used the anti-symmetry of the matrices ${\sf K}^\alpha_{ij}$ 
and the anti-symmetry of the wedge product to obtain the final result. 
Substituting (\ref{eq:vbun12}) in (\ref{eq:vbun9})
now gives (\ref{eq:vbun5}) and (\ref{eq:vbun6}) for $\delta K[\Omega]$. 
Note that the variational principle 
\begin{equation}
\delta K[\Omega]=\int_{\partial M} V^p\beta_p=0\quad \hbox{implies}\quad \beta_p=0, \label{eq:vbun13}
\end{equation}
for $1\leq p\leq N$, since the $V^p$ are arbitrary smooth functions. 
The closure of the differential form system of equations $\beta_p=0$ in 
(\ref{eq:vbun13}) is a set of Cartan forms representing the original 
multi-symplectic partial differential system (\ref{eq:4.36}) (see e.g. \cite{Harrison71}). 

From (\ref{eq:vbun6}) we obtain the pullback forms:
\begin{equation}
\tilde{\beta}_p={\sf K}^\alpha_{pj} \deriv{z^j}{m^s} dm^s\wedge d\tilde{m}_\alpha-\deriv{H}{z^p} dV. 
\label{eq:vbun14}
\end{equation}
However, if there are $n$ spatial $m^\alpha$ components, then
\begin{align}
dm^s\wedge d\tilde{m}_\alpha=&dm^s\wedge (-1)^\alpha dm^0\wedge\ldots
\wedge dm^{\alpha-1}\wedge dm^{\alpha+1}\wedge\ldots dm^n\nonumber\\
=&dm^s \delta^s_\alpha (-1)^{2\alpha} dV=\delta^s_\alpha dV. \label{eq:vbun15}
\end{align}
For 1D gas dynamics, $n=1$. 
Using (\ref{eq:vbun15}), the pullback forms (\ref{eq:vbun14}) reduce to the forms (\ref{eq:vbun7}). 
This completes the proof.
}
\end{proof}

{\bf  In the case of 1D Lagrangian, compressible gas dynamics, the forms (\ref{eq:vbun6}) reduce to:
\begin{equation}
\beta_p={\sf K}^0_{pj} dz^j\wedge dm-{\sf K}^1_{ij} dz^j\wedge dt
-\deriv{H}{z^i} dt\wedge dm, \quad 1\leq p\leq 5. \label{eq:vbun16}
\end{equation}
The closure of the system of forms
$\{\beta_i:\quad 1\leq i\leq 5\}$,
comprise a Cartan differential forms representation of multi-symplectic 
differential equation  system (\ref{eq:4.36}). 
Sectioning the forms (i.e. applying the pullback to the forms) results in the system (\ref{eq:4.36}),
i.e. $\tilde{\beta}_i=0$ gives (\ref{eq:4.36}), where the tilde means sectioning the forms.
}

The two-forms $\{\beta_1,\beta_2,\beta_3,\beta_4,\beta_5\}$ in (\ref{eq:vbun16}) for the 1D gas 
dynamic system are: 
\begin{align}
\beta_1=&-du\wedge dm+dp\wedge dt,\quad \beta_2=(dx-u dt)\wedge dm,\nonumber\\
\beta_3=&dt\wedge(dx-\tau dm),\quad \beta_4=-(dr+T dt)\wedge dm,\quad \beta_5=dS\wedge dm. 
\label{eq:4.82}
\end{align}
In (\ref{eq:4.82}) the basic variables are ${\bf z}=(x,u,p,S,r)^T$. The specific volume $\tau$ 
and the temperature 
of the gas $T$ are functions of the enthalpy $\tilde{w}(p,S)\equiv w(\rho,S)$, and are given by the formulae:
\begin{equation}
\tilde{w}_p=\tau,\quad \hbox{and}\quad \tilde{w}_S=T. \label{eq:4.83}
\end{equation}

We now show, that the 2-forms (\ref{eq:4.82}) form a closed ideal of forms that represent the gas dynamics system 
(\ref{eq:4.36}) (e.g. \cite{Harrison71}). This means that the exterior derivatives of the basis forms 
$\{\beta_i:\ 1\leq i\leq 5\}$ can be written as a combination of the basis forms (\ref{eq:4.82}) in the form:
\begin{equation}
d\beta_i=c_{ij}\wedge \beta_j, \label{eq:4.84}
\end{equation}
where the $c_{ij}$ are 1-forms (some of the $c_{ij}$ are in fact zero). Straightforward evaluation of 
$d\beta_1$, $d\beta_2$ and $d\beta_5$ give the equations:
\begin{equation}
d\beta_1=0,\quad d\beta_2=\beta_1\wedge dt,\quad d\beta_5=0. \label{eq:4.85}
\end{equation}
Using (\ref{eq:4.84}) gives:
\begin{align}
d\tau=&d\tilde{w}_p=\tilde{w}_{pp}dp+\tilde{w}_{pS}dS,\nonumber\\
dT=&d\tilde{w}_S=\tilde{w}_{Sp}dp+\tilde{w}_{SS}dS, \label{eq:4.86}
\end{align}
and using these results in (\ref{eq:4.82}) gives:
\begin{align}
d\beta_3=&-\left(\tilde{w}_{pp}dp+\tilde{w}_{pS}dS\right)\wedge dt\wedge dm,\nonumber\\
=& -\tilde{w}_{pp}\beta_1\wedge dm+\tilde{w}_{pS}\beta_5\wedge dt. \label{eq:4.87}
\end{align}
A similar calculation gives:
\begin{equation}
d\beta_4=\tilde{w}_{SS} \beta_5\wedge dt-\tilde{w}_{Sp}\beta_1\wedge dm. \label{eq:4.88}
\end{equation}
Equations (\ref{eq:4.85}), (\ref{eq:4.87}) and (\ref{eq:4.88}) 
demonstrates that the exterior derivatives of 
the forms $I=\left\{\beta_1,\beta_2,\beta_3,\beta_4,\beta_5\right\}$ 
lie in the ideal $I$, i.e. the set $I$ is a closed ideal of 
forms that represent the 1D gas dynamic equations (\ref{eq:4.36}). We can in fact, replace the 2-forms 
$\beta_1$ and $\beta_5$ in the ideal $I$ by 1-forms $\alpha_1$ and $\alpha_5$ by noting:
\begin{align}
\beta_1=&d\alpha_1\quad\hbox{where}\quad \alpha_1=-udm+p dt, \nonumber\\
\beta_5=&d\alpha_5\quad\hbox{where}\quad \alpha_5=Sdm. \label{eq:4.89}
\end{align}

At this point we could proceed to obtain the Lie symmetries and conservation laws of (\ref{eq:4.36}) 
using the ideal $I$ and the methods of \cite{Harrison71} and \cite{Wahlquist75}.
However, this lies beyond the scope of the present paper, and will be investigated in a separate paper. 
Here, we merely note that for the case of an ideal gas with constant adiabatic index $\gamma$ with equation of state:
\begin{equation}
p=p_0\left(\frac{\tau}{\tau_0}\right)^{-\gamma}\exp\left(\frac{S}{C_v}\right), \label{eq:4.90}
\end{equation}
the enthalpy $\tilde{w}(p,S)$ is given by:
\begin{equation}
\tilde{w}(p,S)=\frac{a_0^2}{\gamma-1} \left(\frac{p}{p_0}\right)^{(\gamma-1)/\gamma}
\exp\left(\frac{S}{\gamma C_v}\right), \label{eq:4.91}
\end{equation}
where $a_0^2=\gamma p_0/\rho_0$ is the square of the sound speed of the gas in $0$-state.

\section{Summary and Concluding Remarks}
\setcounter{equation}{0}
%section5
In this paper we cast the equations of ideal, Lagrangian 1D fluid dynamics in a multi-symplectic form 
(equations (\ref{eq:4.33})-(\ref{eq:4.35})). We used $m$ (the Lagrangian mass coordinate) 
and time $t$ as the independent variables. The dependent variables used were ${\bf z}=(x,u,p,S,r)^T$
where $x$ is the Eulerian particle position, $u=x_t$ is the fluid velocity, $p$ is the gas pressure, $S$ is the 
entropy and $r$ is the Clebsch variable or Lagrange multiplier 
in the modified Lagrangian which ensures $\partial S(m,t)/\partial t=0$. The 
equations are related to a multi-momentum, de Donder-Weyl Hamiltonian formulation (Section 4.1) in 
which $u=x_t=\partial L/\partial x_t$, $p=\partial L/\partial x_m$ and $r=\partial L/\partial S_t$
are the momentum variables and $L$ is the Lagrangian. The same equations are also cast in a multi-symplectic
form (\ref{eq:4.33})-(\ref{eq:4.35}), in which the Hamiltonian $H=(1/2)u^2+w\equiv F_E/(\rho u)$ is the  
ratio of the Eulerian energy flux ($F_E$) to the mass flux ($\rho u$) and $w=(\varepsilon+p)/\rho$ is the 
gas enthalpy.

We obtained the pullback conservation laws of the multisymplectic, Lagrangian gas dynamics system 
and the symplecticity conservation law (see e.g. \cite{Hydon05}). The 
pullback conservation laws can also be obtained from Noether's theorem for the multisymplectic
system, and correspond to the invariance of the action under time translation (giving rise to the energy
conservation law), and a conservation law associated with translations in $m$. The conservation law
due to translations in $m$ is a fluid relabelling symmetry 
conservation law. It is 
 a nonlocal conservation law,  involving the nonlocal Clebsch variable $r$ which is the
Lagrange multiplier used to ensure that $S$ is advected with the flow, (i.e. $S_t=0$).
The variable $r$ satisfies the evolution equation $r_t=-T$ where $T$ is the temperature of the gas. 
The $m$-translation conservation law (\ref{eq:4.51}) or its Eulerian 
version (\ref{eq:4.52}) takes its most simple form by using the 
Clebsch representation for the fluid velocity,
namely $u=\nabla\phi-r\nabla S$.  Conservation law (\ref{eq:4.52}) reduces 
to the equation $\phi_{xt}-\phi_{tx}=0$. Thus, both 
the Eulerian Clebsch representation (e.g. \cite{Cotter07}, 
\cite{Webb14c}) and the Lagrangian versions of the multi-symplectic fluid equations (this paper)
are useful. 
{\bf The symplecticity law  (\ref{eq:4.63a}) is equivalent to  a compatibility condition 
on the pullback conservation laws (see (\ref{eq:4.59})). The symplecticity 
 law (\ref{eq:4.63a}) is equivalent to minus $\partial/\partial m$ operating 
on the co-moving energy equation (\ref{eq:4.45}).} It is also 
equivalent to the the equation $H_{mt}-H_{tm}=0$ 
where $H=(1/2) u^2 +w$ is the multi-symplectic Hamiltonian using the Lagrangian variables.

The Cartan-Poincar\'e $(n+2)$-form $\Omega$ ($n=1$ in 1D gas dynamics) 
gives rise via the variational principle (\ref{eq:vbun1})-(\ref{eq:vbun4}), to a set 
of two-forms $I=\{\beta_1,\beta_2,\beta_3,\beta_4,\beta_5\}$ representing the 1D gas dynamics 
equations.  
 The set $I$ is a closed ideal of forms, which  may be used to 
determine the Lie symmetries 
and conservation laws of the system using Cartan's geometric theory of partial differential 
equations (e.g. \cite{Harrison71}). The Lie point symmetries and potential symmetries
of the gas dynamic equations has been investigated by \cite{Akhatov91}, \cite{Cheviakov08}, 
\cite{Bluman06, Bluman10}, and \cite{Sjoberg04}. The potential symmetries can lead to 
nonlocal conservation laws of the system. 
\cite{Webb09} investigated conservation laws associated 
with the scaling symmetries of the 1D gas dynamic equations 
and with the nonlocal symmetry of the potential 
cover system of equations obtained by \cite{Sjoberg04}. 
%These issues are also relevant 
%to the 1D gas dynamic system used in the present analysis.  

\cite{Webb14c} studied the 1D multi-symplectic gas dynamic equations, 
by using Clebsch variables. 
In this paper, the multi-symplectic, Lagrangian gas dynamic equations were obtained.
 In general, we expect that there is a map between the Lagrangian and Eulerian 
 multi-symplectic gas dynamic equations. This subject lies beyond the scope of the present paper. 
It is of interest to extend the present 1D analysis to the fully 3D compressible 
fluid dynamics and magnetohydrodynamics 
(MHD) equations, using for example the work of \cite{Newcomb62} 
who obtained variational principles for the MHD
 using the Lagrangian map, without using Clebsch variables. 
A recent comprehensive overview of multi-symplectic 
systems by \cite{Roman-Roy09} contains other points of the theory that have not been addressed in the 
present development. For example, \cite{Kanatchikov93, Kanatchikov97, Kanatchikov98} and \cite{Forger05} discuss the covariant 
Poisson brackets
for multi-symplectic and de Donder-Weyl Hamiltonian systems (see also \cite{Marsden86} 
who write the field 
equations in a single Poisson bracket form $\{F,S\}=0$ where $S$ is the action integral with 
applications to electrodynamics, 
relativistic Maxwell-Vlasov equations in plasma physics, and in general relativity).

\section*{Acknowledgements}
I would like to thank the referee for a thorough report, that pointed out inconsistencies 
in the propositions IV.I and IV.2 in the original manuscript.
I acknowledge discussions with Darryl Holm on fluid relabelling symmetries, 
Lagrangian fluid mechanics and 
multi-symplectic formulations of fluid mechanics. I acknowledge discussions with Phil. Morrison 
on multi-symplectic MHD. I thank Prof. QuanMing Lu (CAS, Key Lab. for Geospace environment, USTC, in Hefei, China)
 for financial support to attend the GAMP Geometric Algorithms 
and Methods in Plasma Physics Conference, Hefei, Anhui Province, May 13-15, 2014, where I presented 
 work on Eulerian, multi-symplectic MHD using Clebsch variables. I acknowledge discussions with 
J. F. McKenzie and G.P. Zank. 
\appendix
\section*{Appendix A}
\renewcommand{\theequation}{A.\arabic{equation}}
\chapter{}
In this appendix we indicate the derivation of the pullback conservation laws (\ref{eq:4.40}) 
and the symplecticity conservation laws (\ref{eq:4.56})-(\ref{eq:4.57}) (see \cite{Hydon05} for more detail). 

For multi-symplectic systems with associated 1-forms $\omega^\alpha=L^\alpha_j dz^j$, the 
equations are:
\begin{equation}
{\sf K}^\alpha_{ij}\deriv{z^j}{x^\alpha}=\deriv{H({\bf z})}{z^i},\quad {\sf K}^\alpha_{ij}=\deriv{L^\alpha_j}{z^i}
-\deriv{L^\alpha_i}{z^j}. \label{eq:A1}
\end{equation}
The generalized Legendre transformation for this system is:
\begin{equation}
\left(L^\alpha_j dz^j\right)_{,\alpha}=dL, \label{eq:A2}
\end{equation} 
where 
\begin{equation}
L=L^\alpha_j z^j_{,\alpha}-H({\bf z}). \label{eq:A3}
\end{equation}
The pullback of (\ref{eq:A2}) in which $dz^j=(\partial z^j/\partial x^\beta) dx^\beta$ 
and $dL=(\partial L/\partial x^\beta)dx^\beta$ gives the pullback conservation law (\ref{eq:4.40}). 
Similarly, the symplecticity condition:
\begin{equation}
\kappa^\alpha_{,\alpha}=0\quad\hbox{where}\quad 
\kappa^\alpha=d\omega^\alpha=\frac{1}{2}{\sf K}^\alpha_{ij}dz^i\wedge dz^j, \label{eq:A4}
\end{equation}
pulled back to the base manifold with independent variables $x^\beta$, gives:
\begin{equation}
\kappa^\alpha_{,\alpha}=\frac{1}{2}\left({\sf K}^\alpha_{ij}\deriv{z^i}{x^\beta}\deriv{z^j}{x^\gamma}dx^\beta\wedge dx^\gamma\right)_{,\alpha}=0, \label{eq:A5}
\end{equation}
which implies the symplecticity conservation laws (\ref{eq:4.56})-(\ref{eq:4.57}).

\end{document}